\definecolor{ForestGreen}{rgb}{0.0333,0.4451,0.0333}
\definecolor{DarkRed}{rgb}{0.65,0,0}
\definecolor{Red}{rgb}{1,0,0}
\newcommand{\eps}{\epsilon}
\newcommand{\mcP}{\mathcal{P}}
\newcommand{\poly}{\operatorname{poly}} 
\newcommand{\nextme}{\operatorname{next}}
\crefname{claim}{Claim}{Claims}
\definecolor{mygreen}{rgb}{0,0.65,0.66}
\newcommand{\OPT}{\operatorname{OPT}}
\def\mathcolor#1#{\@mathcolor{#1}}
\def\@mathcolor#1#2#3{%
  \protect\leavevmode
  \begingroup
    \color#1{#2}#3%
  \endgroup
}
\newcommand*{\vsepfbox}[1]{%
  \begingroup
    \sbox0{\fbox{#1}}%
    \setlength{\fboxrule}{0pt}%
    \mbox{\kern-\fboxsep\fbox{\unhbox0}\kern-\fboxsep}%
  \endgroup
}
\theoremstyle{plain} \numberwithin{equation}{section}
\newtheorem{theorem}{Theorem}[section]
\numberwithin{theorem}{section}
\newtheorem{lemma}[theorem]{Lemma}
\newtheorem{claim}[theorem]{Claim}
\newtheorem{definition}[theorem]{Definition}
\theoremstyle{definition}
\newtheorem{remark}[theorem]{Remark}
\DeclareMathOperator*{\argmin}{argmin}
\newcommand{\subalign}[1]{%
  \vcenter{%
    \Let@ \restore@math@cr \default@tag
    \baselineskip\fontdimen10 \scriptfont\tw@
    \advance\baselineskip\fontdimen12 \scriptfont\tw@
    \lineskip\thr@@\fontdimen8 \scriptfont\thr@@
    \lineskiplimit\lineskip
    \ialign{\hfil$\m@th\scriptstyle##$&$\m@th\scriptstyle{}##$\hfil\crcr
      #1\crcr
    }%
  }%
}
\newcounter{note}[section]
\renewcommand{\thenote}{\thesection.\arabic{note}}
\newcommand{\enote}[1]{\refstepcounter{note}$\ll${\bf Ellis~\thenote:}{\sf \color{blue} #1}$\gg$\marginpar{\tiny\bf EH~\thenote}}
\title{Simple Length-Constrained Minimum Spanning Trees}
\author{ D Ellis Hershkowitz\\ Brown University\\ \and Richard Z Huang\\ Brown University }
\date{}
\begin{document}
\maketitle

\begin{abstract}
    In the length-constrained minimum spanning tree (MST) problem, we are given an $n$-node edge-weighted graph $G$ and a length constraint $h \geq 1$. Our goal is to find a spanning tree of $G$ whose diameter is at most $h$ with minimum weight. Prior work of Marathe et al.\ gave a poly-time algorithm which repeatedly computes maximum cardinality matchings of minimum weight to output a spanning tree whose weight is $O(\log n)$-approximate with diameter $O(\log n)\cdot h$.

    In this work, we show that a simple random sampling approach recovers the results of Marathe et al.---no computation of min-weight max-matchings needed! Furthermore, the simplicity of our approach allows us to tradeoff between the approximation factor and the loss in diameter: we show that for any $\epsilon \geq 1/\operatorname{poly}(n)$, one can output a spanning tree whose weight is $O(n^\epsilon / \epsilon)$-approximate with diameter $O(1/\epsilon)\cdot h$ with high probability in poly-time. This immediately gives the first poly-time $\operatorname{poly}(\log n)$-approximation for length-constrained MST whose loss in diameter is $o(\log n)$.
\end{abstract}
\setcounter{page}{1}

\section{Introduction}

The minimum spanning tree (MST) problem is among the most well-studied problems in algorithms. In \emph{MST} we are given an $n$-node graph $G = (V,E)$ and an edge weight function $w : E \to \mathbb{R}_{\geq 0}$. 
Our goal is to find a spanning tree $H$ of $G$ of minimum total edge weight $w(H) := \sum_{e \in H}w(e)$. Part of what has motivated the extensive study of MST is the fact that it models numerous problems of practical interest and, perhaps most notably, the fact that it gives a natural formalism for the problem of finding minimum-cost connected networks. 

However, connectivity is often not all that is desired of a network. If transmitting a message over an edge incurs some latency, then efficient communication requires that all nodes are not just connected but connected by short paths. Likewise, if transmissions over edges fail with some probability, then reliable communication requires that nodes are connected by short paths.

The length-constrained MST problem  aims to address these issues.\footnote{This problem has appeared under many names in the literature, including: hop-bounded MST \cite{haeupler2021tree, chekuri2024approximation}, bounded diameter MST \cite{ravi1998bicriteria,kapoorsarwat2007bdmst} and shallow-light trees \cite{kortsarz1999shallowlight}.} \emph{Length-constrained MST} is identical to MST but we are also given a length constraint $h \geq 1$ and  our output spanning tree $H$ must have diameter at most $h$; here, the diameter of $H$ is the maximum distance between two nodes in $H$ where we treat each edge as having length $1$. We let $\OPT_h$ be the weight of the optimal solution. Length-constrained MST and its variants have been extensively studied \cite{ravi1998bicriteria,bar2001generalized,kortsarz1999shallowlight,kapoorsarwat2007bdmst,chimanispoerhase2014shallowlight,segal2022bdmst,haeupler2021tree, filtser2022hop, chekuri2024approximation,hajiaghayi2009approximating,hajiaghayi2006approximating, kantor2009approximate,althaus2005approximating}.

Arguably the most notable work in this area is that of \cite{ravi1998bicriteria}, which showed that it is always possible to find a spanning tree of weight at most $O(\log n) \cdot \OPT_h$ whose diameter is at most $O(\log n) \cdot h$ in polynomial-time. Henceforth, we refer to an algorithm for length-constrained MST that returns a spanning tree of weight at most $\alpha \cdot \OPT_h$ and diameter at most $\beta \cdot h$ an $\alpha$-approximation with length slack $\beta$. Thus, \cite{ravi1998bicriteria} gave a polynomial-time $O(\log n)$-approximation with length slack $O(\log n)$.  

The algorithm of \cite{ravi1998bicriteria} is essentially $O(\log n)$ rounds of min-weight max-matching computations. Specifically, for vertices $u$ and $v$, let
\begin{align*}
    \mcP_h(s,t) := \{P = (s, \ldots, t) : |P| \leq h\}
\end{align*}
be all $h$-length paths---i.e.\ all paths with at most $h$ edges---between $s$ and $t$. Likewise, let 
\begin{align*}
    d^{(h)}(s,t) := \min \{w(P) : P \in \mcP_h(s,t)\}
\end{align*}
be the $h$-length-constrained distance between $s,t$ and $P_h(s,t)\in\mathcal{P}_h(s,t)$ be any $h$-length $s,t$ path with weight $d^{(h)}(s,t)$. For a subset of vertices $S \subseteq V$, let $G^{(h)}[S] := (S, E_S)$ be the complete graph on $S$ where the weight of edge $\{u,v\} \in E_S$ is $d^{(h)}(u,v)$. Initially, call all nodes of $V$ active and set $T=(V,\emptyset)$. Then, do the following until there is only a single active node: let $S$ be all active nodes; let $M$ be a maximum cardinality matching of $G^{(h)}[S]$ of minimum weight; for each $(u,v) \in M$ remove $v$ from $S$, thereby deactivating $v$, and add to $T$ the path $P_h(u,v)$. Lastly, return any shortest path tree of $T$ for an arbitrary root vertex. 

Since this algorithm reduces the number of active nodes by a constant each round, it only requires $O(\log n)$ rounds. The weight and diameter of the partial solution increases by at most $O(\OPT_h)$ and $O(h)$ respectively each round, giving the approximation and length slack of $O(\log n)$.

\subsection{Our Contributions}

In this work, we provide an algorithm that recovers the guarantees of \cite{ravi1998bicriteria} but is simpler and unlike that of \cite{ravi1998bicriteria}, allows for a tradeoff between approximation and length slack. We discuss these notions below.  Our algorithm is randomized and succeeds with high probability.\footnote{That is, with probability at least $1-1/\poly(n)$.}

\begin{enumerate}
    \item \textbf{Simpler:} We show that repeatedly sampling a uniformly random constant fraction of active nodes and then connecting unsampled active nodes to their $d^{(h)}$-nearest sampled node provides the same guarantees as \cite{ravi1998bicriteria}. In other words, an approach far less sophisticated than that of \cite{ravi1998bicriteria}---which does not require min-weight max-cardinality matching computations---can achieve the same results. 
    
    Specifically, each of the $O(\log n)$ rounds of our algorithm simply samples a random constant fraction of active nodes, connects all unsampled active nodes to their $d^{(h)}$-nearest sampled node, and then declares any unsampled node inactive. Like the algorithm of \cite{ravi1998bicriteria}, one can argue that each round reduces the number of active nodes by a constant factor and so after only $O(\log n)$ rounds this process ends with no more active nodes. Likewise, we show that the diameter and weight of our solution increases by an $h$ and $\OPT_h$ (in expectation), leading to an approximation and length slack of $O(\log n)$.
    
    \item \textbf{Tradeoff:} Our approach gives a natural way of trading off between approximation and length slack. Specifically, we can reduce the total number of rounds required until we have no more active nodes by decreasing our sampling probability: if we fix an $\eps \geq 1/\poly(n)$ and then sample each node with probability $1/n^{\eps}$ then we have no more active nodes after only about $1/\eps$ rounds. \Cref{alg:stars} gives a formal description of our algorithm.
    
    In the end, we give a poly-time $O(n^\eps / \eps)$-approximation with length slack $O(1/\eps)$ for any $\eps \geq 1/ \poly(n)$. Since modifying our sampling probability does not change how much we increase the diameter of our solution in each round, our algorithm has a length slack of only $O(1/\eps)$. On the other hand, this more aggressive sampling probability increases the weight of our solution: we show that the expected increase in weight of the procedure is $O(n^\eps \cdot \OPT_h)$ in each round. Showing this is the main challenge of our approach; more below.
    
     Note that this tradeoff allows us to improve the length slack of \cite{ravi1998bicriteria} to $o(\log n)$ while still obtaining a $\poly(\log n)$-approximation by, e.g.,\ letting $\eps = \Theta(\log \log n / \log n)$. To our knowledge, and as we later discuss in \Cref{sec:related}, this is the first poly-time  $\poly(\log n)$-approximation for length-constrained MST with length slack $o(\log n)$. Furthermore, this tradeoff is particularly noteworthy in the context of length-constrained MST as similar tradeoffs appear in many other length-constrained problems: for example, similar tradeoffs are known for spanners \cite{althofer1993sparse}, oblivious reconfigurable networks \cite{amir2022optimal, wilson2024breaking} the length and cut slack of length-constrained expander decompositions \cite{haeupler2022hop,haeupler2024new,haeupler2023parallel} and the arboricity of ``parallel greedy'' graphs \cite{haeupler2023parallel}.
\end{enumerate}
\noindent  The following summarizes our main result. 
\begin{restatable}{theorem}{thmmain}\label{thm:main}
    For any $\eps \geq 1/\poly(n)$, there is a poly-time $O(n^\epsilon/\epsilon)$-approximation for length-constrained MST with length slack $O(1/\eps)$ that succeeds with high probability.
\end{restatable}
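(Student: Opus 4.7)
The plan is to analyze Algorithm~\ref{alg:stars} by separately bounding the number of rounds, the diameter of the output, and its weight in expectation, and then amplifying the expected weight bound to a high-probability bound.

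\textbf{Termination and diameter.} A node remains active after a round iff it is sampled, which happens independently with probability $p = 1/n^\eps$, so the expected number of active nodes after $K$ rounds is at most $np^K = n^{1-K\eps}$. Taking $K = \Theta(1/\eps)$ with a sufficiently large constant and applying Markov's inequality gives termination after $O(1/\eps)$ rounds with high probability. A straightforward induction on the round index shows that after round $k$ every vertex of $V$ is joined in the partial tree $T$ to its current representative via a path of at most $kh$ edges, since each round concatenates an additional $h$-edge path onto the previous route. Hence on termination every vertex lies within $O(h/\eps)$ edges of a common representative, and so the shortest-path tree output has diameter $O(h/\eps)$.

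\textbf{Per-round expected weight.} This is the main technical step. Fix a round with active set $S$ of size $m$, and let $T^*$ be any optimal solution, so $w(T^*) = \OPT_h$ and $T^*$ has diameter at most $h$. Since the $T^*$-path between any two vertices has at most $h$ edges, $d^{(h)}(u,v) \leq d_{T^*}(u,v)$ for all $u,v$. My plan is to linearize the geometry of $S$ using a DFS Euler tour of $T^*$: order the elements of $S$ by first-visit time as $s_1, \ldots, s_m$, and let $w_i$ be the tour weight between the first visits of $s_i$ and $s_{i+1}$, so that $\sum_i w_i \leq 2\OPT_h$ and $d^{(h)}(s_i, s_j) \leq \sum_{\min(i,j)\leq k<\max(i,j)} w_k$. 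The problem then reduces to: given $m$ points on a line with consecutive-gap weights $\{w_i\}$ summing to $2\OPT_h$, sample each independently with probability $p$ and bound the expected total distance from unsampled points to the nearest sampled one. A per-interval charging argument---noting that interval $w_i$ lies on the nearest-left-sampled path of $s_j$ iff $s_{i+1}, \ldots, s_j$ are all unsampled, an event of probability $(1-p)^{j-i}$---telescopes to $O(\OPT_h/p) = O(n^\eps \cdot \OPT_h)$ per round. Summing over the $O(1/\eps)$ rounds gives total expected weight $O(n^\eps/\eps) \cdot \OPT_h$.

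\textbf{High probability and main obstacle.} Markov's inequality converts the expected weight bound into a constant-probability bound on weight for a single trial. Running the algorithm $O(\log n)$ times independently and returning the lightest output that also satisfies the diameter constraint amplifies this to the high-probability guarantee on all three measures. The main obstacle is the per-round weight bound: the naive triangle inequality $d^{(h)}(u,v) \leq \OPT_h$ gives only $|S|\cdot \OPT_h$ per round, which is too weak by a factor of roughly $m/n^\eps$. The crux is to exploit both the tree structure of $T^*$ and its diameter constraint via the Euler-tour linearization, reducing a seemingly complex graph-distance question to a clean one-dimensional nearest-neighbor estimate that scales with $1/p$.
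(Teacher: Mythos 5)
Your proposal is correct and follows the same strategy as the paper: an Euler-tour linearization of the optimal tree reduces the per-round weight bound to a one-dimensional nearest-sampled-neighbor estimate, and the termination, diameter, and high-probability amplification arguments match. Your bookkeeping in the key step is slightly cleaner---you charge each tour gap $w_i$ directly, observing that $s_j$ crosses it with probability $(1-p)^{j-i}$ so each gap is crossed $O(1/p)$ times in expectation---whereas the paper sums over all vertex/offset pairs on a cyclic contracted tour and invokes an auxiliary bound $\sum_{j,k} e^{(-j-k)/M} = O(M^2)$. One small omission worth spelling out: to guarantee every unsampled $s_j$ has a sampled point to its left on the line, root the Euler tour at the always-sampled root $r$ so that $s_1 = r$; the paper instead works on the cycle and walks clockwise, which achieves the same thing without fixing the starting vertex.
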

\noindent The result of \cite{ravi1998bicriteria} can be recovered from the above by letting $\eps = \Theta(1 /\log n)$.  Our result generalizes in standard ways: it is easy to see that our algorithm works for graphs with general lengths, not just unit edge lengths. Further note that, as observed by \cite{ravi1998bicriteria}, it is easy to turn the above result into an $O(1/\eps)$-approximation for the version of length-constrained MST that treats the diameter as the objective and the weight as the constraint. Specifically, our result gives an $O(1/\eps)$-approximation for the problem of finding a minimum diameter spanning tree whose weight is no larger than some input weight $W$ with the slight caveat that we return a weight $W \cdot O(n^{\eps}/\eps)$ tree instead of a weight $W$ tree. 

Like in \cite{ravi1998bicriteria}, our algorithm and analysis can be easily adapted to work for the Steiner tree version of the problem. Our algorithm can further be implemented using $O(h)$ Bellman-Ford iterations for each vertex.

As mentioned, the main challenge in proving our result is showing that in each round of random sampling, we increase the weight of our solution by at most about $n^\eps \cdot \OPT_h$. The rough idea for doing so is as follows. We consider an Euler tour of an optimal solution\footnote{We say that an Euler tour of a tree is the Euler tour on the graph on the same vertices and edges, but each edge is duplicated into two arcs.}, unroll this tour into a cycle and then contract this cycle so that it only contains one copy of each active node; see \Cref{fig:eulertourcyclefull}. Then, suppose each unsampled active node ``walks'' clockwise along this cycle until it hits a sampled node. It is not too hard to see that the total distance that a node must walk in the cycle until it finds a sampled node upper bounds its $d^{(h)}$ distance from some sampled node in the original graph. It follows that the total distance walked by all active nodes until they hit a sampled node upper bounds the sum of the $d^{(h)}$ distances of each active node from some sampled node; in other words, it upper bounds what our algorithm pays in one round. The expected number of nodes that walk over a given edge of this cycle is at most $O(n^\eps)$ by our choice of sampling probability. Thus we pay at most $n^\eps$ times the total weight of the cycle which is, in turn, bounded by $O(\OPT_h)$ (since it was obtained by just unrolling and contracting an Euler tour of the optimal solution). We formalize this argument in \Cref{sec:analy}.

\subsection{Additional Related Work}\label{sec:related}

Our work adds to a considerable body of work on length-constrained MST in addition to that of  \cite{ravi1998bicriteria}. We summarize this work below and in \Cref{fig:related}.

\begin{figure}[H]\label{fig:related}
\begin{center}\renewcommand{\arraystretch}{1.25}
\begin{tabular}{|l|l|l|l|l|l|}
\hline
 Approximation & Length Slack &  Running Time &Notes & Citation  \\ \hline
 $O(\log n)$ & $O(\log n)$    & $\poly(n)$     & Computes Matchings &  \cite{ravi1998bicriteria} \\ \hline
$O(n^\eps \cdot \exp(1/\eps))$ &    $1$      &  $n^{1/\eps} \cdot \poly(n)$    & Recursive  &\cite{kortsarz1999shallowlight} \\ \hline
$O(n^\eps /\eps^2)$ &    $1$      &  $n^{1/\eps} \cdot \poly(n)$    & Recursive  &\cite{charikar1999approximation} \\ \hline
 $O(1/\eps)$ &  $O(n^\eps / \eps)$        &   $\poly(n)$   & Computes MSTs &\cite{kapoorsarwat2007bdmst} \\ \hline
 $O(\log^3 n)$ & $O(\log^3 n)$       &   $\poly(n)$   & LP-Competitive & \cite{chekuri2024approximation} \\ \hline
$O(n^\eps/\eps)$ & $O(1/\eps)$       & $\poly(n)$  &  Randomized  &  \textbf{This Work} \\ \hline
\end{tabular}
\end{center}\caption{A summary of work on length-constrained MST.}\label{fig:related}
\end{figure}

Two lines of work give algorithms for length-constrained MST that allow for tradeoffs. First, \cite{kortsarz1999shallowlight} showed how to give an approximation of $O(n^{\eps} \cdot \exp(1/\eps))$ with no length slack but with running time $n^{1/\eps} \cdot \poly(n)$ using an intricate recursive procedure for $\eps > 0$. 
A later paper \cite{charikar1999approximation} noted and fixed a bug in the former and improved the approximation guarantee to $O(n^{\eps}/\eps^2)$. Second, Kapoor and Sarwat \cite{kapoorsarwat2007bdmst} showed how, given $\eps > 0$, one can compute an $O(1/\eps)$-approximation with length slack $O(n^{\epsilon}/\epsilon)$.\footnote{Kapoor and Sarwat \cite{kapoorsarwat2007bdmst} parameterize their algorithm slightly differently; we describe their result here in a way that is amenable to comparison with our result.} 
The algorithm of Kapoor and Sarwat repeatedly merges bounded depth subtrees of the MST of the aforementioned graph $G^{(h)}[V]$ (and as such requires repeatedly computing an MST).

Note that it is easy to see that there are many regimes of $\epsilon$ for which our results are not implied by any of the previous work. For instance, our algorithm with $\eps = \Theta(\log \log n / \log n)$ gives the first poly-time $\poly (\log n)$-approximation with length slack $o(\log n)$. Observe that the guarantees of Kapoor and Sarwat \cite{kapoorsarwat2007bdmst} cannot provide length slack $o(\log n)$ since $n^{\eps}/\eps = \Omega(\log n)$ for any $\eps > 0$ (even if $\eps$ is a function of $n$).\footnote{We give a proof sketch of why $n^{\eps}/\eps = \Omega(\log n)$: if $\eps \leq 1/\log n$ then the result is trivial. On the other hand, if $\eps  \geq 1 / \log n$ then we can express $\eps$ as $f(n) / \log n$ for some function $f$ where $f(n) \geq 1$. It follows that $n^{\eps}/\eps \geq \log n \cdot  \frac{\exp(f(n))}{f(n)} \geq \log n$ where we used the fact that $f(n) \geq 1$.} Likewise, the algorithms of \cite{kortsarz1999shallowlight,charikar1999approximation} only give $\poly(\log n)$-approximations when $\eps \leq O(\log \log n / \log n)$ in which case the running time of their algorithm is $\omega(\poly(n))$.

A very recent work of \cite{chekuri2024approximation} showed that poly-time $\poly(\log n)$-approximations for length-constrained MST that are competitive with respect to the natural linear program are possible with $\poly(\log n)$ length slack by using using tree embeddings for length-constrained distances and length-constrained oblivious routing. On the hardness side, \cite{naor1997retractedshallowlight} showed that any poly-time $o(\log n)$-approximation requires length slack $\Omega(1)$ via a reduction from set cover.\footnote{We also note that this work claimed an $O(\log n)$-approximation with length slack $2$ (for even the directed case) but later retracted this due to an error (see e.g.\ \cite{chimanispoerhase2014shallowlight,haeupler2021tree}).}

We also describe notable work on some special cases and generalizations of length-constrained MST. \cite{segal2022bdmst} gave approximation algorithms when the ratio of the maximum and minimum edge weight is bounded; specifically they gave a poly-time $O(1+\frac{c_{\max}}{c_{\min}} \cdot \frac{h}{D(n-1)})$ approximation where $D$ is the (hop) diameter of the input graph and $c_{\max}$ and $c_{\min}$ are the max and min edge weights respectively---however, this approximation can be arbitrarily large if $c_{\max}\gg c_{\min}$. \cite{althaus2005approximating} showed that if the input graph is a complete graph that induces a metric then a poly-time $O(\log n)$-approximation is possible. Lastly, for the directed case, \cite{chimanispoerhase2014shallowlight} gave a poly-time $O(n^{1/2+\epsilon})$ approximation for any fixed constant $\eps > 0$ with $O(1)$ length slack by using a combination of randomized LP rounding and techniques similar to \cite{kortsarz1999shallowlight}. 

\section{Algorithm Description} 
We now formally describe our algorithm. We assume we are given a graph $G=(V,E)$ with an edge-weight function $w:E\to\mathbb{R}_{\geq0}$, diameter bound $h$, and parameter $\epsilon\geq1/\poly(n)$. In the algorithm, we initialize our partial solution as an empty subgraph $T_0=(V,\emptyset)$ and choose an arbitrary vertex of $V$ to be the root $r$. We further initially label all of the vertices (including $r$) as \textit{active} vertices.  

We then perform $3/\epsilon$ rounds where for each round $i$, we let $T_i$ be our current partial solution and sample each vertex (besides the root $r$) from the remaining active vertices independently with probability $n^{-\epsilon}$ each, and \textit{merge} each unsampled vertex $u$ to the sampled vertex $v^*$ such that $d^{(h)}\left(u,v^*\right)$ is minimized. 
We merge an unsampled vertex $u$ to a sampled vertex $v^*$ by 
adding a minimum-weight $h$-length path $P_h(u,v^*)$ between $u$ and $v^*$ to our partial solution $T_i$, and marking $u$ as \textit{inactive} for the rest of the procedure. 
To avoid the event that no vertex is sampled we sample $r$ with probability $1$ each round, i.e.\ $r$ is always an active vertex. 
This process is repeated with the remaining active vertices in the following round. 


We will that show after $3/\epsilon$ rounds all the vertices besides the root will be inactive with high probability, implying that all vertices are connected the root and thus give a connected subgraph. We note that our ``with high probability'' guarantee is maintained if our algorithm ran for $c/\eps$ rounds for any constant $c\geq3$.
After the final round we simply return a shortest paths tree of $T_{3/\eps}$ that is 
rooted at $r$. Observe that this last step only increases the diameter of our subgraph by a constant factor $2$.  

The pseudocode for our algorithm is below.
\begin{algorithm}[H]
    \caption{$O(n^\epsilon/\epsilon)$-approximation with $O(1/\epsilon)$ length slack for length-constrained MST}
    \begin{algorithmic}[1]
        \label{alg:stars}
        \Statex \textbf{Input}: undirected graph $G=(V,E)$ with edge-weight function $w:E\to \mathbb{R}_{\geq0}$, diameter bound $h$, parameter $\epsilon\geq1/\poly(n)$ 
        \Statex \textbf{Output}: a subgraph of $G$ that is a spanning tree with diameter at most $O(1/\epsilon)\cdot h$ w.h.p.\ and has weight at most $O(n^\epsilon/\epsilon)\cdot\operatorname{OPT}_h$ in expectation
        \Statex
        \State $T_0\gets(V,\emptyset)$, $S_0\gets V$, choose an arbitrary $r\in V$ \Comment{Initialize partial solution and active vertices}
        \For{$i\in[3/\epsilon]$}
            \State $T_i\gets T_{i-1}$, $S_i\gets\{r\}$ 
            \For{$v\in S_{i-1}\setminus\{r\}$} \Comment{Sample vertices}
                \State $S_i\gets S_i\cup v$ with probability $n^{-\epsilon}$ 
            \EndFor
            \For{$u\in S_{i-1}\setminus S_i$} \Comment{Merge unsampled vertices}
                \State $v^*\gets\argmin_{v\in S_i}\left(d^{(h)}(u,v)\right)$
                \State $T_i\gets T_{i}\cup P_h(u,v^*)$ 
            \EndFor
        \EndFor
        \State Return a shortest paths tree of $T_{3/\epsilon}$ rooted at $r$
    \end{algorithmic}
\end{algorithm}

\section{Algorithm Analysis}\label{sec:analy}
In this section, we analyze \cref{alg:stars}, showing our main result. Specifically, we show the following. 
\begin{theorem}\label{thm:expectation}
    In polynomial time, \cref{alg:stars} finds a spanning tree with diameter at most $O(1/\epsilon)\cdot h$ with high probability and with weight at most $O(n^\epsilon/\epsilon)\cdot \operatorname{OPT}_h$ in expectation.
\end{theorem}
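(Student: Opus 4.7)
The plan is to prove three things: the algorithm runs in polynomial time, its output has diameter $O(1/\epsilon) \cdot h$ with high probability, and its expected weight is $O(n^\epsilon / \epsilon) \cdot \OPT_h$. The first is immediate since we run $3/\epsilon = O(\poly(n))$ rounds, and each round only requires $h$-length-constrained distances from the sampled vertices, computable in $\poly(n)$ time via $O(h)$ Bellman--Ford iterations per source. For the diameter, every vertex that is deactivated in some round is attached to a then-active vertex by an $h$-length path, so chaining attachments across rounds shows that any vertex deactivated by round $i$ is within hop-distance $i \cdot h$ of some round-$i$ active vertex in $T_i$; once only $r$ remains active, every vertex lies within hop-distance $(3/\epsilon) h$ of $r$ in $T_{3/\epsilon}$, so the final shortest-paths tree rooted at $r$ has diameter at most $2 \cdot (3/\epsilon) h = O(1/\epsilon) \cdot h$. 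Connectivity reduces to a direct Markov argument: each non-root vertex survives each round independently with probability $n^{-\epsilon}$, so the expected number of non-root active vertices after $3/\epsilon$ rounds is at most $n \cdot n^{-3} = n^{-2}$, hence zero with probability at least $1 - n^{-2}$.

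The main step is to show that the expected weight added in a single round is $O(n^\epsilon) \cdot \OPT_h$. Fix a round with active set $A \ni r$ and let $T^*$ be an optimal solution of weight $\OPT_h$ and diameter at most $h$. Take an Euler tour $\tau$ of $T^*$ (total weight exactly $2 \OPT_h$) and form a cycle $C$ on $A$: list the vertices of $A$ as $r = u_1, u_2, \ldots, u_k$ in the order of their first appearance in $\tau$, and let the weight of the cycle edge between $u_j$ and $u_{j+1}$ (indices mod $k$) equal the weight of the corresponding $\tau$-segment. The total weight of $C$ is at most $2 \OPT_h$. The key inequality is that for any two active vertices $u_j, u_\ell$,
\[
d^{(h)}(u_j, u_\ell) \leq (\text{cycle-walk weight from } u_j \text{ to } u_\ell):
\]
the cycle walk is itself a walk in $T^*$ whose weight is at least the weight of the unique $T^*$-path between $u_j$ and $u_\ell$, and that $T^*$-path has at most $h$ hops by the diameter bound on $T^*$, so it is a valid $h$-length path in $G$.

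Consequently, in each round an unsampled active vertex $u_j$ contributes cost at most the cycle-walk weight from $u_j$ to the next clockwise sampled vertex, so the round cost is bounded by $\sum_e w(e) \cdot X_e$, where $X_e$ counts the walks that cross the cycle edge $e$. A walk from $u_j$ crosses $e = (u_\ell, u_{\ell+1})$ only if $u_j, u_{j+1}, \ldots, u_\ell$ are all unsampled; since sampling is independent and $r = u_1$ is always sampled (which truncates the relevant event), each such event has probability $(1 - n^{-\epsilon})^{\ell - j + 1}$, and summing a geometric series yields
\[
\mathbb{E}[X_e] \leq \sum_{t \geq 1} (1 - n^{-\epsilon})^t \leq n^\epsilon.
\]
Hence the expected round cost is at most $n^\epsilon \cdot w(C) \leq 2 n^\epsilon \cdot \OPT_h$, and summing over $3/\epsilon$ rounds gives expected total weight $O(n^\epsilon / \epsilon) \cdot \OPT_h$, with the final shortest-paths-tree step only removing edges. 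The main obstacle I anticipate is formally justifying the cycle construction together with the inequality $d^{(h)}(u_j, u_\ell) \leq (\text{cycle-walk weight})$, and carefully handling the cyclic structure around the always-sampled root when bounding $\mathbb{E}[X_e]$.
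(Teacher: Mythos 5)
Your proof is correct and follows the paper's core decomposition: build the Euler-tour cycle on the current active set, observe that the clockwise walk from any unsampled vertex to its next sampled vertex upper bounds the $d^{(h)}$ cost the algorithm actually pays (because that walk lies inside the optimal tree, whose unique paths have at most $h$ hops), and then bound the expected per-round charge. The one place you do something genuinely different is the final calculation. The paper fixes a starting vertex $j$, sums over walk lengths $k$, switches the order of summation, and arrives at a double sum $\sum_j\sum_k e^{(-j-k)/n^\epsilon}$, which it controls via a separate auxiliary fact showing this double sum is $O(n^{2\epsilon})$, then divides by $n^\epsilon$ to conclude $O(n^\epsilon)$ per edge. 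You instead fix a cycle edge $e=(u_\ell,u_{\ell+1})$ and directly compute the expected number of walks crossing it, $\mathbb{E}[X_e]\leq\sum_{t\geq1}(1-n^{-\epsilon})^t = n^\epsilon-1$, a single geometric series. This edge-centric version is tighter and more elementary---it avoids both the $1-x\leq e^{-x}$ approximation and the auxiliary fact---while yielding the same $O(n^\epsilon)\cdot\OPT_h$ per-round bound. The diameter, connectivity, and running-time arguments match the paper's.
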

\noindent
By standard arguments, the expectation guarantee for the weight approximation of \cref{thm:expectation} can be made into a ``with high probability'' guarantee using an additional multiplicative $O(\log n)$ factor in runtime, proving \cref{thm:main}. The remainder of this section will be dedicated to proving \cref{thm:expectation}. We first show that \cref{alg:stars} gives a feasible solution with $O(1/\eps)$ length slack.
\begin{lemma}[Diameter]\label{lemma:randstarsfeasiblediameter}
    \cref{alg:stars} terminates with a spanning tree of diameter at most $O(1/\eps)\cdot h$ with high probability.
\end{lemma}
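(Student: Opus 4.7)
The plan is to establish two complementary facts: (i) with high probability, after $3/\epsilon$ rounds every non-root vertex has been marked inactive, so that every vertex is connected through the algorithm's added paths back to $r$; and (ii) a round-by-round hop-distance bound showing that any vertex inactivated by the end of round $k$ is at hop-distance at most $k \cdot h$ in $T_k$ from some vertex in $S_k$.

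For (i), note that a non-root vertex $v$ survives as an active vertex through round $i$ if and only if it was sampled in each of rounds $1,\dots,i$. Since sampling is independent across rounds and within a round each eligible vertex is sampled with probability $n^{-\epsilon}$, the probability $v$ is still active after $i$ rounds is $n^{-\epsilon i}$. With $i = 3/\epsilon$, this is $n^{-3}$, and a union bound over the at most $n-1$ non-root vertices gives that the probability some non-root vertex is still active after round $3/\epsilon$ is at most $1/n^{2}$. Hence with high probability the only active vertex at termination is $r$.

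For (ii), I would induct on $k$. The base case $k=0$ is vacuous. For the inductive step, suppose the claim holds for round $k-1$. Any vertex $u$ newly inactivated in round $k$ has a path $P_{h}(u,v^{*})$ of at most $h$ edges added to $T_k$, where $v^{*}\in S_k$, so $u$ is within hop-distance $h$ of $v^{*}$ in $T_k$. For any vertex $x$ inactivated in some earlier round $j<k$, the inductive hypothesis places $x$ within hop-distance $jh \leq (k-1)h$ of some $y\in S_{k-1}$ in $T_{k-1}\subseteq T_k$; if $y\in S_k$ we are done, otherwise $y$ was merged in round $k$ to some $v^{*}\in S_k$ via an $h$-length path, giving hop-distance at most $(k-1)h + h = kh$ from $x$ to $v^{*}$ in $T_k$.

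Combining (i) and (ii) at $k = 3/\epsilon$: with high probability $S_{3/\epsilon} = \{r\}$, so every vertex lies within hop-distance $(3/\epsilon)\cdot h$ of $r$ in $T_{3/\epsilon}$. Taking the shortest paths tree rooted at $r$ preserves the hop-distance from every vertex to $r$, so the returned tree has diameter at most twice the maximum distance from $r$, i.e.\ at most $2\cdot(3/\epsilon)\cdot h = O(1/\epsilon)\cdot h$. The main (and only real) obstacle is carefully phrasing the inductive claim so that the ``chain'' of merges across rounds composes correctly into an additive hop bound; the probability argument for connectivity is essentially immediate from independence of sampling across rounds.
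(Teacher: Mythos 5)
Your proof is correct and takes essentially the same approach as the paper: the feasibility argument via independence across rounds and a union bound is identical, and your inductive claim (ii) is a cleaner, more explicit formalization of the paper's observation that after $i$ rounds every inactivated vertex lies within hop-distance $O(i\cdot h)$ of a surviving sampled vertex, followed by the same factor-of-$2$ loss for taking a shortest-paths tree rooted at $r$.
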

\begin{proof}
    We first show that the output is feasible, i.e.\ a spanning tree. It suffices to show that when the outer for loop terminates, all vertices but the root are inactive with high probability. Indeed, when all non-root vertices are inactive, that means we have some path connecting every inactive vertex to the root. 
    
    Thus, consider any non-root vertex $v$: the probability that it is active after $3/\epsilon$ rounds is $n^{\epsilon(-3/\epsilon)}=n^{-3}$. Union bounding over all $n-1$ non-roots gives that some non-root is active at the end with probability at most $n^{1-3}\leq n^{-2}$. Then with probability at least $1-n^{-2}$, we have that $T_{3/\eps}$ is a connected subgraph and thus any shortest paths tree on it is a spanning tree. 

    For the diameter of the solution, observe that any pair of unsampled vertices that are merged to the same sampled vertex must be connected via a path of length at most $2 \cdot i\cdot  h$ in $T_i$.
    This implies that when all non-roots are deactivated by round $3/\epsilon$ with high probability, the distance between any pair of vertices in $T_i$ is at most $2\cdot i\cdot h$. 
    Hence $T_{3/\epsilon}$ is a subgraph with diameter at most $O(1/\epsilon)\cdot h$ with high probability. Finally, taking a shortest paths tree of $T_{3/\eps}$ can only increase the diameter by another factor of $2$ since for any pair of vertices $u,v$ we can find a $O(1/\epsilon)\cdot h$-length path from $u$ to the root and then a $O(1/\epsilon)\cdot h$-length path from the root to $v$. 
\end{proof}
\noindent
The bulk of the algorithm analysis lies in the weight approximation, which we discuss next.

\subsection{Weight Analysis} 
Here we will prove that the expected weight of our final solution of \cref{alg:stars} is at most $O(n^\epsilon/\epsilon)\cdot \operatorname{OPT}_h$. The main step in proving this is to show that in each round $i$ of the for loop (which we will call a round) of \cref{alg:stars}, we add at most $O(n^\epsilon)\cdot \operatorname{OPT}_h$ to the total weight of our solution in expectation. In other words, we show that $\mathbb{E}[w(T_i)-w(T_{i-1})]\leq O(n^\eps)\cdot\operatorname{OPT}_h$.
 
We first give an overview of the weight analysis. The main idea here is defining a new graph based on Eulerian tours containing only edges of an optimal length-constrained MST solution of our instance (with duplicates) and a process of charging  to the weights of the edges of this new graph for each round of \cref{alg:stars}. In particular, we will show that in each round, the weight added by \cref{alg:stars} to the current solution is upper bounded by the weight added by the charging process (\cref{claim:med}), which is further upper bounded by $O(n^\epsilon)\cdot \operatorname{OPT}_h$ (\cref{claim:hard}).
We formalize the above intuition with several preliminaries before proceeding to the final lemma. We start by defining the graphs based on Eulerian tours and the charging process over these graphs that we will analyze. 

\begin{definition}[Eulerian Tour Cycle]\label{defn:etc}
    Given a tree $T$ and an Eulerian tour $\pi$ of $T$ of length $t:=2(n-1)$, let the Eulerian tour cycle of $\pi$ be a directed cycle $C=(V^\pi,E^\pi)$ with the following: \begin{enumerate}
        \item \textbf{Vertices:} $V^\pi=\{v_0, v_1, \ldots,v_{t-1}\}$ is the set of vertices of the tour $\pi$ in the order in which they are visited (where vertices of $T$ may be repeated). For every vertex $v$ of $T$ we use $v^\pi$ to denote the copy of $v$ in $V^\pi$ that has minimum index.
        \item \textbf{Edges:} For the pair of vertices $v_j,v_{j+1\pmod t}\in V^\pi$ corresponding to vertices $u,v\in T$, we define a directed edge $(j,j+1\pmod t)$ in $E^\pi$ with weight equal to the edge that connects $u,v$ in $\pi$, for every $j=0,1,\dots,t-1$.
    \end{enumerate} 
\end{definition}

\noindent
Observe that each vertex of $T$ appears at least once in $V^\pi$ and each edge of $T$ appears exactly twice in $E^\pi$. We will consider an Eulerian tour $\pi$ of an \textit{optimal} length-constrained MST solution of our given instance and its Eulerian tour cycle $C$. 

For sake of analysis it will be convenient to consider the Eulerian tour cycle on only active vertices without duplicates for a given round, motivating the following definition.

\begin{definition}[Contracted Eulerian Tour Cycle]\label{defn:cetc}
    In round $i$ we construct a contracted version of $C$ by defining the directed cycle $C_i=(V_i,E_i)$ with the following: 
    \begin{enumerate}
        \item \textbf{Contracted vertices}: $V_i=\{v^\pi:v\in S_{i-1}\cap V^\pi\}$ is the set of minimum-index vertex copies that are in $S_{i-1}$. We will further let $t_i:=|V_{i-1}|$ and reindex the vertices of $V_i$ as $v^\pi_0,v^\pi_1,\dots,v^\pi_{t_i-1}$ in the same relative order as their original indices in $V^\pi$.
        \item \textbf{Contracted edges}: we have a directed edge $(v^\pi_j,v^\pi_{j+1})\in E_i$ with weight equal to the sum of edge weights in the directed path from $v^\pi_j$ to $v^\pi_{j+1}$ in $C$ (note that these vertices may have different indices in $C$ than in $C_i$) for every $j=0,1,\dots,t_i-1$. 
    \end{enumerate}
\end{definition}
\noindent
Having defined these graphs, we are now ready to define a ``charging'' process whose weight in one round upper bounds the weight of \cref{alg:stars} in one round. Informally, in any round $i$ of \cref{alg:stars} where we add a set of paths between unsampled and sampled vertices to $T_i$ and incur the weight of those paths in $G$, we will map unsampled vertices to the sampled vertices in $C_i$ and charge the weights of the corresponding directed paths in $C_i$ (rather than in $G$). 

We may think of this as a merging process akin to that of \cref{alg:stars} since we also remove inactive vertices from consideration for the $C_{i+1}$ charging. However, in this new process we merge on $C_i$ and do not necessarily consider minimum-weight $h$-length paths in $G$. We overload notation and let  $w_i(v^\pi_j,v^\pi_k)$ be the weight of the directed path from $v^\pi_j$ to $v^\pi_k$ in $C_i$. The charging process is defined below:

\begin{definition}[Eulerian Tour Charging]\label{defn:etcharging}
    For every round $i$ of \cref{alg:stars}, we define the following charging procedure on $C_i$: for every $v^\pi_j\in V_{i-1}\setminus V_i$ (i.e.\ for every unsampled vertex) let $\nextme_i(v)$ be the vertex $v^\pi_k$ in $V_i$ (where $k$ is its index in $V_{i-1}$) that minimizes $k-j\pmod{t_i}$ and define the function
    $$ \Phi(C_i):=\sum_{v\in V_{i-1}\setminus V_i}w_i(v,\nextme_i(v)) $$ 
\end{definition}

\noindent
By the process described in \cref{defn:etcharging}, each unsampled vertex is essentially mapped to a directed path in $C$ (and hence in $C_i$ as well). 
Since we guarantee that the root is always a sampled vertex, every unsampled vertex gets mapped to some sampled vertex in $C_i$ because it is a cycle. It is also clear that the directed paths whose weights we are summing 
to $\Phi(C_i)$ have length at most $h$, since a path in $C_i$ corresponds to a simple path in the optimal length-constrained MST solution which by definition has diameter at most $h$. We will typically refer to an Eulerian tour charging process as $C$ charging, or $C_i$ charging if we are specifically referring to round $i$. 


See \Cref{fig:eulertourcyclefull} for an example of the previous definitions. 

\begin{figure}[H]
    \begin{subfigure}[b]{.33333\textwidth}
      \centering
      \includegraphics[width=.65\linewidth]{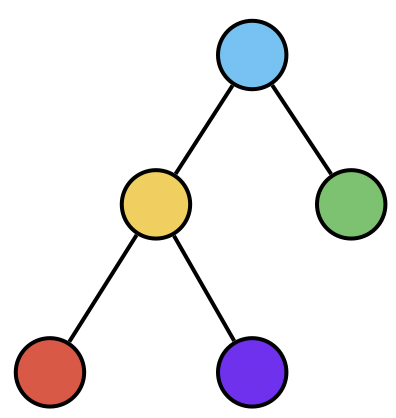}
      \caption{A tree $T$.}\label{subfig:tree}
    \end{subfigure}%
    ~
    \begin{subfigure}[b]{.33333\textwidth}
      \centering
      \includegraphics[width=.9\linewidth]{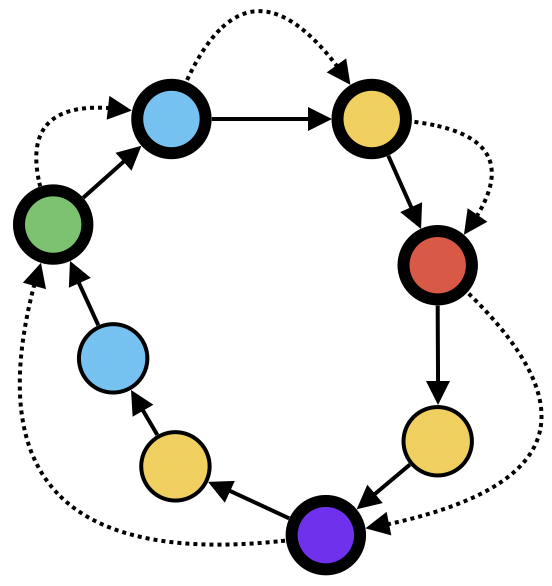}
      \caption{$C$ and $C_0$.}\label{subfig:et1}
    \end{subfigure}%
    ~
    \begin{subfigure}[b]{.33333\textwidth}
        \centering
        \includegraphics[width=1.025\linewidth]{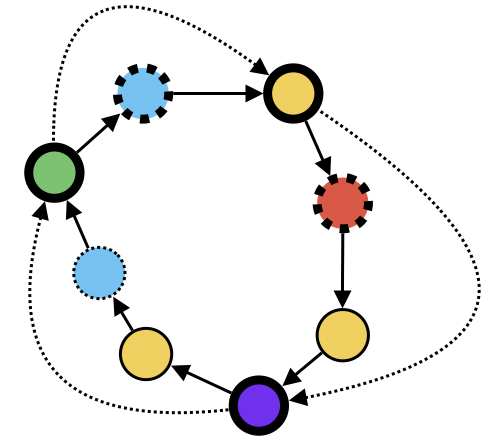}
        \caption{$C_1$.}\label{subfig:et2}
    \end{subfigure}
\caption{Examples of \cref{defn:etc,defn:cetc,defn:etcharging}. \Cref{subfig:tree}: a tree $T$; further let $\pi$ be an Eulerian tour of $T$ in the following order: blue, yellow, red, yellow, purple, yellow, blue, and green. \Cref{subfig:et1}: an Eulerian tour cycle $C$ (in solid edges) and its contracted Eulerian tour cycle (in dotted edges). The bolded vertices are minimum index copies. \Cref{subfig:et2}: the contracted Eulerian tour cycle of $\pi$ in the first round $C_1$ when $S_1$ contains only green, yellow, and purple. In the $C_1$ charging, the bolded blue maps to the bolded yellow, the bolded red maps to the bolded purple, and $\Phi(C_i)$ is the sum of the weights of the corresponding paths in dotted edges.}
\label{fig:eulertourcyclefull}
\end{figure}

\noindent
We will use the following two claims regarding the Eulerian tour charging process. In the first, we show that the weight added by \cref{alg:stars} to the current solution in one round is at most the Eulerian tour's charging weights in one round. 

\begin{claim}\label{claim:med}
    For any round $i$ of \cref{alg:stars}, we have $w(T_{i})-w(T_{i-1})\leq \Phi(C_i)$.
\end{claim}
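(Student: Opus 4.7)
My overall plan is to prove the bound term-by-term: for each unsampled active vertex $u$ I would produce an inequality relating the weight the algorithm incurs on $u$'s behalf to the charging weight that $u^\pi$ contributes to $\Phi(C_i)$, and then sum over all unsampled vertices.

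First, in round $i$ of \cref{alg:stars} we add, for each $u \in S_{i-1}\setminus S_i$, an $h$-length path $P_h(u, v^*_u)$ of weight $d^{(h)}(u, v^*_u)$, where $v^*_u := \argmin_{v\in S_i} d^{(h)}(u,v)$. Since edge weights are non-negative and every edge added in the transition from $T_{i-1}$ to $T_i$ lies in the union of these paths,
\begin{align*}
    w(T_i) - w(T_{i-1}) \;\leq\; \sum_{u \in S_{i-1}\setminus S_i} d^{(h)}(u, v^*_u).
\end{align*}
Fix such a $u$, and let $\bar w \in S_i$ be the sampled vertex whose minimum-index copy in $V^\pi$ equals $\nextme_i(u^\pi)$. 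Since $\bar w \in S_i$, the optimality of $v^*_u$ gives the trivial inequality $d^{(h)}(u, v^*_u) \leq d^{(h)}(u, \bar w)$.

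The heart of the argument --- and the step I expect to be the main obstacle --- is showing
\begin{align*}
    d^{(h)}(u, \bar w) \;\leq\; w_i\bigl(u^\pi,\, \nextme_i(u^\pi)\bigr).
\end{align*}
To prove this I would unwind \cref{defn:cetc}: each edge of $C_i$ arises from a contiguous directed subpath of $C$, so the edge from $u^\pi$ to $\nextme_i(u^\pi)$ corresponds to a contiguous subwalk $W$ of the Euler tour $\pi$ in the optimal length-constrained spanning tree $T^*$ starting at $u$ and ending at $\bar w$, whose total weight equals $w_i(u^\pi, \nextme_i(u^\pi))$ by construction. Because $T^*$ is a tree, the unique simple $u$--$\bar w$ path $Q$ in $T^*$ must appear edge-by-edge inside $W$ (each edge of $Q$ is crossed an odd number of times by $W$, every other edge an even number of times), so non-negativity of edge weights gives $w(Q) \leq w(W) = w_i(u^\pi, \nextme_i(u^\pi))$. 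Since $T^*$ has diameter at most $h$, the path $Q$ uses at most $h$ edges, and since $T^*\subseteq G$, $Q$ is a valid path in $G$; hence $d^{(h)}(u,\bar w) \leq w(Q)$, yielding the displayed inequality.

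Chaining the three inequalities and summing over $u\in S_{i-1}\setminus S_i$ --- which is in bijection with $V_{i-1}\setminus V_i$ via $u\mapsto u^\pi$ --- produces $w(T_i)-w(T_{i-1}) \leq \Phi(C_i)$, as claimed. The only delicate point in the whole argument is the walk-contains-path step in $T^*$, which requires using both that $T^*$ is acyclic (so there is a unique simple path) and that edge weights are non-negative (so extra traversals can only increase total weight).
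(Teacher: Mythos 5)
Your proof is correct and follows essentially the same route as the paper: both bound the algorithm's per-round increase by the charged weights $w_i(u^\pi, \nextme_i(u^\pi))$ via the chain $d^{(h)}(u,v^*_u) \leq d^{(h)}(u,\bar w) \leq w_i(u^\pi, \nextme_i(u^\pi))$. If anything, your treatment of the key middle step is more careful than the paper's, which loosely describes the charged object as ``a path that exists in $G$'' when it is really a closed-tour \emph{walk} in the optimal tree; your explicit extraction of the simple $u$--$\bar w$ path $Q$ from the walk $W$ (using acyclicity and non-negativity) is the right way to make that rigorous.
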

\begin{proof}
    In \cref{alg:stars}, we always merge each unsampled vertex $u$ to the sampled vertex that is connected to $u$ with the cheapest minimum-weight $h$-length path, and these are the paths we add to our solution. In the $C$ charging, we merge unsampled vertices to sampled vertices based on the order of their appearances in $V^\pi$. In other words, the paths we charge in $C_i$ are paths that exist in $G$, but are not necessarily minimum-weight $h$-length paths in $G$. Then for any unsampled vertex $v^\pi_j\in V_{i-1}\setminus V_i$ we have $d^{(h)}(v_j,v^*)\leq w_i(v^\pi_j,\nextme_i(v^\pi_j))$. Hence the sum of the weights of the paths that we add in \cref{alg:stars} is at most the sum of the weights of the directed paths that we add in the $C$ charging for any round. 
\end{proof}
\noindent
For the next claim, the following fact about sums of exponentials will be useful. 
    \begin{restatable*}{fact}{factsumexps}\label{fact:sumofexps}
        Given $M,N>0$ we have
    $\sum_{j=0}^{N-1} \sum_{k=0}^{N-1} e^{(-j-k)/M} \leq O(M^2)$.
    \end{restatable*}
\noindent
We defer the proof of \cref{fact:sumofexps} to \cref{section:appendix}. We now compute an upper bound of $\mathbb{E}\left[\Phi(C_i)\right]$. 
\begin{claim}\label{claim:hard} 
    For any round $i$, we have $\mathbb{E}\left[\Phi(C_i)\right]\leq O\left(n^\eps\right) \cdot \operatorname{OPT}_h$.
    \end{claim}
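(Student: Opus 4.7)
The plan is to decompose $\Phi(C_i)$ as a double sum over pairs (unsampled vertex, its next sampled vertex), rewrite this as a sum over edges of $C_i$, and then bound the expected contribution of each edge using independence of the sampling and \cref{fact:sumofexps}. The first step is to note that $w(C_i) = w(C) = 2\operatorname{OPT}_h$: each edge of $C_i$ is a contracted directed subpath in $C$ whose weights sum together by \cref{defn:cetc}, and $C$ is an Eulerian tour cycle of an optimal length-constrained MST $T^*$ by \cref{defn:etc}, which traverses each edge of $T^*$ exactly twice.

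Second, I would reindex $V_i$ so that $r$ sits at position $0$; this is valid because the algorithm guarantees $r\in V_i$ every round. Writing $e_l$ for the $l$-th edge of $C_i$, for each unsampled $v^\pi_k$ with $k\geq 1$, the path in $C_i$ from $v^\pi_k$ to $\nextme_i(v^\pi_k)=v^\pi_m$ traverses edges $e_k,e_{k+1},\ldots,e_{m-1}$. Swapping the order of summation gives
\[
\Phi(C_i)=\sum_{l=0}^{t_i-1}w(e_l)\sum_{k\leq l<m}\mathbf{1}\!\left[v^\pi_k,\ldots,v^\pi_{m-1}\text{ unsampled and }v^\pi_m\text{ sampled}\right].
\]
Taking expectations and using independence of the non-root samplings, the probability of the indicator event is $(1-n^{-\epsilon})^{m-k}\cdot n^{-\epsilon}$ in the generic case; when the walk wraps around to the root (i.e.\ $m=t_i$), the probability is $(1-n^{-\epsilon})^{t_i-k}$, whose contribution I would bound separately by a direct geometric-series calculation giving $O(n^\epsilon)\cdot w(C_i)$.

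Third, for the generic contribution, using $1-x\leq e^{-x}$ and the substitution $a=l-k\geq 0$, $b=m-l-1\geq 0$,
\[
\mathbb{E}[\Phi(C_i)] \;\leq\; n^{-\epsilon}\sum_{l}w(e_l)\sum_{a,b\geq 0}e^{-(a+b+1)/n^{\epsilon}} \;+\; O(n^\epsilon)\cdot w(C_i).
\]
Applying \cref{fact:sumofexps} with $M=n^\epsilon$ bounds the inner double sum by $O(n^{2\epsilon})$, so the first term is $n^{-\epsilon}\cdot O(n^{2\epsilon})\cdot w(C_i) = O(n^\epsilon)\cdot w(C_i)$. Combining with the wraparound term gives $\mathbb{E}[\Phi(C_i)]\leq O(n^\epsilon)\cdot w(C_i)=O(n^\epsilon)\cdot\operatorname{OPT}_h$.

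The main obstacle is the cycle structure combined with the special role of the always-sampled root $r$. The root serves as a sentinel terminating every walk, which is what keeps the independence calculation clean and ensures the expected walk length is only $O(n^\epsilon)$ rather than unbounded; without it, walks could in principle loop the cycle many times. The wraparound case needs its own bound since the root is sampled with probability $1$ rather than $n^{-\epsilon}$, which loses the crucial $n^{-\epsilon}$ factor that compensates for the $O(n^{2\epsilon})$ arising from the two-dimensional sum of exponentials, but fortunately this extra contribution is itself at the $O(n^\epsilon)\cdot\operatorname{OPT}_h$ scale and does not affect the final bound.
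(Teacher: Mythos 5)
Your approach is essentially identical to the paper's: both rewrite $\mathbb{E}[\Phi(C_i)]$ as a sum over edges of $C_i$ of edge-weight times expected traversal count, bound that count via the geometric tail of the Bernoulli sampling together with \cref{fact:sumofexps}, and finish using $w(C_i)=2\operatorname{OPT}_h$. Your explicit separation of the wraparound term ending at the always-sampled root is in fact slightly more careful than the paper, whose formula $\frac{1}{n^\epsilon}\left(1-\frac{1}{n^\epsilon}\right)^k$ undercounts by a factor of $n^\epsilon$ precisely when $v^\pi_{j+k}=r$; as you observe, the total contribution of these root-terminating walks is itself $O(n^\epsilon)\cdot\operatorname{OPT}_h$, so the final bound is unaffected.
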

\begin{proof}
    We remind the reader that $S_i$ denotes the set of sampled vertices in round $i$ out of the remaining active vertices $S_{i-1}$. For further brevity and abuse of notation we will use $w_i(v^\pi_j,S_i)$ to denote the weight of the path from $v^\pi_j$ to $\operatorname{next}(v^\pi_j)$ in $C_i$. 
        
    To bound the expected weight of the $C_i$ merging, we first write it as 
    \begin{equation}\label{equ:mainsum}
        \mathbb{E}\left[\Phi(C_i)\right] = \mathbb{E}\left[\sum_{j=0}^{t_i-1}w_i\left(v^\pi_j,S_i\right)\right] = \sum_{j=0}^{t_i-1}\mathbb{E}\left[w_i\left(v^\pi_j,S_i\right)\right]
    \end{equation}
    where the second equality is by linearity of expectation. From here on all addition in indices of a vertex will be done over $\mathbb{Z}_{t_i}$. Fixing any $j$, we can bound each term in the sum of \cref{equ:mainsum} as follows:
    \begin{align*}
        \mathbb{E}\left[w_i\left(v^\pi_j,S_i\right)\right] &\leq \sum_{k=0}^{t_i-1} \operatorname{Pr}\left(v^\pi_j,v^\pi_{j+1},\dots,v^\pi_{j+k-1} \not\in S_i \wedge v^\pi_{j+k} \in S_i\right)\cdot w_i\left(v^\pi_j,v^\pi_{j+k}\right) \\
        &= \sum_{k=0}^{t_i-1} \operatorname{Pr}\left(v^\pi_j,v^\pi_{j+1},\dots,v^\pi_{j+k-1} \not\in S_i \wedge v^\pi_{j+k} \in S_i\right)\cdot \sum_{k'=0}^{k-1} w_i\left(v^\pi_{j+k'},v^\pi_{j+k'+1}\right)\\
        &= \sum_{k=0}^{t_i-1} \frac{1}{n^\epsilon}\left(1-\frac{1}{n^\epsilon}\right)^{k}\cdot \sum_{k'=0}^{k-1} w_i\left(v^\pi_{j+k'},v^\pi_{j+k'+1}\right) \\
        &\leq \frac{1}{n^\epsilon}\sum_{k=0}^{t_i-1}e^{-k/n^\epsilon}\cdot\sum_{k'=0}^{k-1} w_i\left(v^\pi_{j+k'},v^\pi_{j+k'+1}\right)
    \end{align*}
    where the second line is expanding the weight of the path into a sum of the weights of the path's edges, the third line is by our sampling process, and the fourth line is by the inequality $1-x\leq e^{-x}$. 
        
    Then we can put this back into \cref{equ:mainsum} and switch the order of summation to get 
    \begin{align*}
        \sum_{j=0}^{t_i-1}\mathbb{E}\left[w_i\left(v^\pi_j,S_i\right)\right] 
        &\leq \frac{1}{n^\epsilon}\sum_{k'=0}^{t_i-1} w_i\left(v^\pi_{k'},v^\pi_{k'+1}\right)\cdot \sum_{j=0}^{t_i-1} \sum_{k=0}^{t_i-1}  e^{(-j-k)/n^\epsilon} 
    \end{align*}
    Observe this rearrangement is counting the weight of each edge for each possible $j,k$ path. Now it remains to bound $\sum_{j=0}^{t_i-1} \sum_{k=0}^{t_i-1}  e^{(-j-k)/n^\epsilon}$. 
    Intuitively, every edge
    $(v^\pi_{k'},v^\pi_{k'+1})$ contributes its weight to $\Phi(C_i)$ in round $i$ at most $\sum_{j=0}^{t_i-1}\sum_{k=0}^{t_i-1} e^{(-j-k)/n^\epsilon}$ times (in expectation) since we have to consider this edge's contribution to every $j,j+k$ path for every $0\leq j<t_i$ and $0\leq k<t_i$. Specifically, applying \cref{fact:sumofexps} with $N=t_i,M=n^\epsilon$ gives the following:
    \begin{align*}
        \sum_{j=0}^{t_i-1}\mathbb{E}\left[w_i(v^\pi_j,S_i)\right] &\leq  \frac{1}{n^\epsilon}\sum_{k'=0}^{t_i-1} w_i\left(v^\pi_{k'},v^\pi_{k'+1}\right)\cdot \sum_{j=0}^{t_i-1} \sum_{k=0}^{t_i-1}  e^{(-j-k)/n^\epsilon} \\
        &\leq \frac{1}{n^\epsilon}\sum_{k'=0}^{t_i-1}w_i\left(v^\pi_{k'},v^\pi_{k'+1}\right)\cdot O\left(n^{2\epsilon}\right) \\
        &\leq O\left(n^\epsilon\right)\sum_{k'=0}^{t_i-1}w_i\left(v^\pi_{k'},v^\pi_{k'+1}\right) \\
        &\leq O\left(n^\epsilon\right)\cdot \operatorname{OPT}_h
    \end{align*}
    where the fourth line is because $\sum_{k'=0}^{t_i-1}w_i(v^\pi_{k'},v^\pi_{k'+1})\leq \sum_{e\in E^\pi}w(e) =  2\cdot\operatorname{OPT}_h$. 
    
    Thus, each edge contributes at most $O(n^\epsilon)$ times its own weight to the total weight of the $C_i$ charging in expectation, and we have that the weight of the $C_i$ charging for any round $i$ is at most $O(n^\epsilon)\cdot\operatorname{OPT}_h$ in expectation.
\end{proof}
\noindent
With this, we have what we need to bound the expected weight of \cref{alg:stars}.

\begin{lemma}[Weight]\label{lemma:randstarscost}
    \cref{alg:stars} terminates with a spanning tree with weight at most $O(n^\epsilon/\epsilon)\cdot \operatorname{OPT}_h$ in expectation.
\end{lemma}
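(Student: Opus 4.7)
The plan is to derive \cref{lemma:randstarscost} as a direct consequence of the two claims already established: \cref{claim:med} (a per-round deterministic bound comparing algorithm weight to charging weight) and \cref{claim:hard} (a per-round expected bound on charging weight). The argument is a telescoping sum combined with linearity of expectation; no additional technical machinery is needed.

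First I would note that $T_0 = (V,\emptyset)$ so $w(T_0) = 0$, and hence
\begin{equation*}
    w(T_{3/\eps}) \;=\; \sum_{i=1}^{3/\eps}\bigl(w(T_i) - w(T_{i-1})\bigr).
\end{equation*}
Applying \cref{claim:med} termwise (this bound holds pointwise on every realization of the algorithm's randomness) yields $w(T_{3/\eps}) \leq \sum_{i=1}^{3/\eps} \Phi(C_i)$. Then taking expectations and invoking linearity of expectation together with \cref{claim:hard} in each round gives
\begin{equation*}
    \mathbb{E}[w(T_{3/\eps})] \;\leq\; \sum_{i=1}^{3/\eps} \mathbb{E}[\Phi(C_i)] \;\leq\; \frac{3}{\eps}\cdot O(n^\eps)\cdot \OPT_h \;=\; O(n^\eps/\eps)\cdot \OPT_h.
\end{equation*}

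Finally, I would observe that the algorithm's actual output is a shortest paths tree of $T_{3/\eps}$ rooted at $r$, whose edge set is a subset of the edges of $T_{3/\eps}$. Hence the output weight is at most $w(T_{3/\eps})$, and the expected weight bound transfers unchanged. Feasibility (that $T_{3/\eps}$ is connected so a spanning shortest paths tree exists) is already guaranteed with high probability by \cref{lemma:randstarsfeasiblediameter}; this does not affect the unconditional expectation bound on weight, which is all we need for this lemma.

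There is really no main obstacle here since the substance of the argument is packaged in \cref{claim:med} and \cref{claim:hard}. The only minor subtlety worth flagging is separating the two sources of randomness: the weight bound is stated in expectation and follows from linearity regardless of whether the algorithm produces a spanning tree, whereas the ``spanning tree'' qualifier is a high-probability statement coming from \cref{lemma:randstarsfeasiblediameter}. The two statements compose cleanly to yield \cref{lemma:randstarscost}, and together with \cref{lemma:randstarsfeasiblediameter} they establish \cref{thm:expectation}.
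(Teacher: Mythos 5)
Your proposal is correct and takes essentially the same route as the paper's proof: combine \cref{claim:med} and \cref{claim:hard} per round, sum over the $3/\eps$ rounds by linearity of expectation, and note that passing to the shortest paths tree of $T_{3/\eps}$ cannot increase weight. Your write-up merely makes the telescoping decomposition and the separation of the in-expectation weight bound from the high-probability feasibility statement more explicit, but the argument is the same.
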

\begin{proof}
    By \cref{claim:med,claim:hard}, we have that in any round $i$ the expected increase in weight of the partial solution of \cref{alg:stars} is at most $O(n^\epsilon)\cdot\operatorname{OPT}_h$. The number of such increases is $3/\epsilon$ (i.e.\ the number of rounds), and taking a shortest paths tree on $T_{3/\epsilon}$ cannot increase the weight. Therefore the weight of our final solution is at most $O(n^\epsilon/\epsilon)\cdot \operatorname{OPT}_h$ in expectation.
\end{proof}

\subsection{Putting Things Together}
Observe that each step of \cref{alg:stars} can be done in polynomial time, and there are polynomially many rounds by our choice of $\epsilon$. Hence \cref{alg:stars} is a polynomial-time algorithm. Combining this with \cref{lemma:randstarsfeasiblediameter,lemma:randstarscost} yields \cref{thm:expectation}, and as stated in the beginning of this section, \cref{thm:main} follows from there.

\section*{Acknowledgments}
The first author would like to thank R Ravi for introducing him to this problem and for many useful conversations about it over the years.

\bibliographystyle{alpha}
\bibliography{references}
\appendix
\section{Proof of \cref{fact:sumofexps}}\label{section:appendix}
\factsumexps
\begin{proof}
    For $a\geq0$, let $B_a$ be the interval $[a\cdot M, (a+1)\cdot M)$. We use these $M$-length ``buckets'' to more conveniently bound the contribution by the $M$ values that lie within each interval. 
    We can then bound the double sum as follows:
    \begin{align*}
        \sum_{j=0}^{N-1}\sum_{k=0}^{N-1}e^{(-j-k)/M} &= \sum_{a=0}^{N/M-1} \sum_{j\in B_a } \sum_{b=0}^{N/M-1} \sum_{k\in B_b} e^{(-j-k)/M} \\
        &\leq \sum_{a=0}^{N/M-1} \sum_{b=0}^{N/M-1} |B_a|\cdot|B_b|\cdot e^{-a-b} \\
        &= M^2\sum_{a=0}^{N/M-1} e^{-a} \sum_{b=0}^{N/M-1} e^{-b} \\
        &= O(M^{2})
    \end{align*}
    where the first line is breaking the interval $[0,N)$ into buckets $B_a,B_b$ for every $a,b$, the second line is because $(j+k)/M\geq a+b\implies e^{(-j-k)/M}\leq e^{-a-b}$ for every $j\in B_a,k\in B_b$, and the last line is because both of the geometric sums are bounded by constants.
\end{proof}

\end{document}